\newcommand{\sfrac}[2]{\frac{#1}{#2}}
\newcolumntype{L}[1]{>{\raggedright\let\newline\\\arraybackslash\hspace{0pt}}m{#1}}
\newcolumntype{C}[1]{>{\centering\let\newline\\\arraybackslash\hspace{0pt}}m{#1}}
\newcolumntype{M}[1]{>{\centering\let\newline\\\arraybackslash\hspace{0pt}$}m{#1}<{$}}
\newcolumntype{R}[1]{>{\raggedleft\let\newline\\\arraybackslash\hspace{0pt}}m{#1}}
\newlist{rdescription}{description}{1}
\setlist[rdescription]{leftmargin =\dimexpr\eqboxwidth{Des}+\labelsep}}%
\tikzset{
    every state/.append style={
        execute at begin node=$,
        execute at end node=$
    },
    initial text = 
}
\declaretheorem{theorem}
\declaretheorem[sibling=theorem]{lemma,corollary}
\theoremstyle{definition}
\newtheorem{definition}[theorem]{Definition}
\crefname{fact}{fact}{facts}
\Crefname{fact}{Fact}{Facts}
\crefname{constraint}{constraint}{constraints}
\Crefname{constraint}{Constraint}{Constraints}
\crefname{sentence}{sentence}{sentences}
\Crefname{sentence}{Sentence}{Sentences}
\crefname{expression}{expression}{expressions}
\Crefname{expression}{Expression}{Expressions}
\NewDocumentEnvironment{delineate}{m}{\textcolor{cyan!70!black!}{> > > > Begin: #1 > > > >}}{\textcolor{red!70!black!}{< < < < End: #1 < < < <}}
\mathchardef\mhyphen="2D
\DeclarePairedDelimiter\paren\lparen\rparen
\newcommand{\oh}[1]{\ensuremath{\mathit{o}\paren*{#1}}}
\newcommand*{\IfItalicsTF}{%
  \ifx\f@shape\my@test@it
    \expandafter\@firstoftwo
  \else
    \expandafter\@secondoftwo
  \fi
}
\newcommand*{\my@test@it}{it}
\newcommand{\contextsensitivemathrm}[1]{\IfItalicsTF{\mathit{#1}}{\mathrm{#1}}}
\newcommand\machineformat[1]{\ensuremath{\contextsensitivemathrm{#1}}}
\newcommand\langclassformat[1]{\ensuremath{\mathsf{#1}}}
\newcommand{\langformat}[1]{\ensuremath{\mathtt{#1}}}
\NewDocumentCommand{\DTIME}{ o m }{\langclassformat{DTIME\IfValueTF{#1}{\paren[#1]{#2}}{\paren*{#2}}}}
\NewDocumentCommand{\defineautomata}{ m m }{%
    \expandafter\NewDocumentCommand\csname#1fa\endcsname{}{\ensuremath{\machineformat{#1fa}}}%
    \expandafter\NewDocumentCommand\csname rt#1fa\endcsname{}{\ensuremath{\machineformat{rt\mhyphen#1fa}}}%
    \expandafter\NewDocumentCommand\csname o#1fa\endcsname{ o }{\ensuremath{\machineformat{1#1fa\IfValueT{##1}{\paren*{##1}}}}}%
    \expandafter\NewDocumentCommand\csname t#1fa\endcsname{ o }{\ensuremath{\machineformat{2#1fa\IfValueT{##1}{\paren*{##1}}}}}%
    \expandafter\NewDocumentCommand\csname#1tm\endcsname{}{\ensuremath{\machineformat{#1tm}}}%
    \expandafter\NewDocumentCommand\csname #2FA\endcsname{ d<> }{\ensuremath{\langclassformat{#2FA}\IfValueT{##1}{\paren*{##1}}}}%
    \expandafter\NewDocumentCommand\csname RT#2FA\endcsname{ d<> }{\ensuremath{\langclassformat{RT\mhyphen#2FA}\IfValueT{##1}{\paren*{##1}}}}%
    \expandafter\NewDocumentCommand\csname O#2FA\endcsname{ o d<> }{\ensuremath{\langclassformat{1#2FA}\IfValueTF{##1}{\paren*{##1\IfValueT{##2}{, ##2}}}{\IfValueT{##2}{\paren*{##2}}}}}%
    \expandafter\NewDocumentCommand\csname T#2FA\endcsname{ o d<> }{\ensuremath{\langclassformat{2#2FA}\IfValueTF{##1}{\paren*{##1\IfValueT{##2}{, ##2}}}{\IfValueT{##2}{\paren*{##2}}}}}%
}
\newcommand\setneg[1]{\ensuremath{\overline{#1}}}
\newcommand{\PL}{\langformat{PAD}}
\newcommand{\orderedeq}{\ensuremath{EQ}}
\newcommand{\acc}{\ensuremath{\mathrm{acc}}}
\newcommand{\rej}{\ensuremath{\mathrm{rej}}}
\newcommand{\qacc}{\ensuremath{q_{\acc}}}
\newcommand{\qrej}{\ensuremath{q_{\rej}}}
\newcommand{\sacc}{\ensuremath{s_{\acc}}}
\newcommand{\srej}{\ensuremath{s_{\rej}}}
\newcommand{\lend}{\ensuremath{\rhd}}
\newcommand{\rend}{\ensuremath{\lhd}}
\newcommand{\reject}{\emph{reject}}
\setlist{itemsep=0pt}
\newlist{observations}{enumerate}{1}
\setlist[observations]{
    label=\arabic{*}.,
    ref=\arabic{*}
}
\crefname{observationsi}{observation}{observations}
\Crefname{observationsi}{Observation}{Observations}
\newlist{differences}{enumerate}{1}
\setlist[differences]{
    label=\arabic{*}.,
    ref=\arabic{*}
}
\crefname{differencesi}{difference}{differences}
\Crefname{differencesi}{Difference}{Differences}
\newlist{strategies}{enumerate}{1}
\setlist[strategies]{
    label=Strategy \arabic{*}:,
    ref=\arabic{*},
    labelwidth=\widthof{Strategy 1:},
    leftmargin=\parindent+\labelwidth+\labelsep
}
\crefname{strategiesi}{strategy}{strategies}
\Crefname{strategiesi}{Strategy}{Strategies}
\newlist{turingenum}{enumerate}{1}
\setlist[turingenum]{
    noitemsep,
    labelsep=.5em,
    leftmargin=1em+\parindent,
    labelwidth=1em,
    label=(\Roman{*}),
    ref=\Roman{*}
}
\crefname{turingenumi}{Stg.}{Stgs.}
\Crefname{turingenumi}{Stage}{Stages}
\newcommand{\narrowfont}[3]{\scalebox{#1}[1.0]{#3}}
\newcommand{\turinglabelformat}[1]{\narrowfont{0.85}{-20}{\scriptsize#1}}
\newlength{\turinglabelgap}
\DeclareExpandableDocumentCommand{\IfNoValueOrEmptyTF}{mmm}
 {
  \IfNoValueTF{#1}{#2}
   {
    \tl_if_empty:nTF {#1} {#2} {#3}
   }
 }
\NewDocumentEnvironment{turing}{ O{} m m }
 {\IfNoValueOrEmptyTF{#1}{\setlength{\turinglabelgap}{0em}}{\setlength{\turinglabelgap}{0.5em}}\small\begin{enumerate}[labelsep=0pt,align=left,parsep=0pt,leftmargin=\widthof{\turinglabelformat{#1}}+\turinglabelgap, 
 listparindent=0pt] 
  \item[]\ignorespaces#3\\[0.5em]
  \begin{turingenum}[
    nosep,
    align=Center,
    labelwidth=\widthof{\turinglabelformat{#1}},
    labelsep=\turinglabelgap,
    leftmargin=0em 
  ]}
 {\unskip\end{turingenum}\end{enumerate}}
\newcommand{\optionaldesc}[3]{%
  \phantomsection
#1\protected@edef\@currentlabel{#1}\protected@edef\cref@currentlabel{%
    [#3][\arabic{#3}][\cref@result]%
    #1%
  }\label{#2}%
}
\NewDocumentCommand{\defineturingitem}{ m m }{%
    \expandafter\NewDocumentCommand\csname#1item\endcsname{ o o m }{\IfValueTF{##1}{\item[\turinglabelformat{\optionaldesc{##1}{\IfValueTF{##2}{##2}{stg:##1}}{turingenumi}}]\begin{adjustwidth}{#2}{0pt}\ignorespaces##3\end{adjustwidth}}{\item[]\begin{adjustwidth}{#2}{0pt}\ignorespaces##3\end{adjustwidth}}}%
}
\def\squiggly{\bgroup \markoverwith{\lower3.9\p@\hbox{\sixly \scalebox{1.2}[0.65]{\char58}}}\ULon}
\def\mysout{\leavevmode\bgroup\def\ULthickness{1pt}\ULdepth=-.4ex\ULset}
\newcommand{\stkout}[1]{\begingroup\ifmmode\text{\mysout{\ensuremath{#1}}}\else\mysout{#1}\fi\endgroup}
\newcommand{\utkanworry}[1]{\textcolor{red!45!black!90}{\ifmmode\smash[b]{\squiggly{#1}}\else\squiggly{#1}\fi}}
\renewcommand{\textvisiblespace}[1][.7em]{%
  \makebox[#1]{%
    \kern.07em
    \vrule height.5ex
    \hrulefill
    \vrule height.5ex
    \kern.07em
  }
}
\DeclareMathAlphabet{\mathsl}{\encodingdefault}{\familydefault}{m}{sl}
\SetMathAlphabet{\mathsl}{bold}{\encodingdefault}{\familydefault}{bx}{sl}
\ifodd\value{page}
\author{A. C. Cem Say\\\href{mailto:say@bogazici.edu.tr}{\texttt{say@bogazici.edu.tr}}}
\title{Short and useful quantum proofs for sublogarithmic-space verifiers}
\date{\small \itshape
  Department of Computer Engineering,
  Bo\u{g}azi\c{c}i University,
  Bebek 34342,
  \.{I}stanbul,
  T\"{u}rkiye
  }
\newcommand{\ie}{i.e.}
\NewDocumentCommand{\padlang}{ e{_} m }{\ensuremath{\PL_{\IfValueT{#1}{#1}}\paren*{#2}}}
\NewDocumentCommand{\padpair}{ s m }{%
    \IfBooleanTF{#1}%
    {\ensuremath{\paren*{\padlang{#2}, \padlang{\setneg{#2}}}}}%
    {\ensuremath{\paren{\padlang{#2}, \padlang{\setneg{#2}}}}}%
}
\NewDocumentCommand{\branch}{ m o }{\makebox{\textsc{[#1]}\IfValueT{#2}{ Branch #2:}}}
\NewDocumentCommand{\branchpr}{ m o }{\branch{\,Probability: \ensuremath{#1}\,}[#2]}
\NewDocumentCommand{\branchperc}{ m o }{\branch{#1\% prob.}[#2]}
\NewDocumentCommand{\vartextvisiblespace}{ O{.7em} O{.7ex} }{%
  \makebox[#1]{%
    \kern.07em
    \vrule height#2
    \hrulefill
    \vrule height#2
    \kern.07em
  }
}
\newcommand{\estring}{\ensuremath{\lambda}}
\newcommand{\blanksymb}{\ensuremath{\text{\vartextvisiblespace}}}
\NewDocumentCommand{\Qpub}{ e{_} }{\ensuremath{Q_{\textnormal{pub}\IfValueT{#1}{,#1}}}}
\NewDocumentCommand{\Qpri}{ e{_} }{\ensuremath{Q_{\textnormal{pri}\IfValueT{#1}{,#1}}}}
\NewDocumentCommand{\Qcom}{ e{_} }{\ensuremath{Q_{\textnormal{com}\IfValueT{#1}{,#1}}}}
\NewDocumentCommand{\bpub}{ e{_} }{\ensuremath{b_{\textnormal{pub}\IfValueT{#1}{,#1}}}}
\NewDocumentCommand{\bpri}{ e{_} }{\ensuremath{b_{\textnormal{pri}\IfValueT{#1}{,#1}}}}
\def\@testdef #1#2#3{%
  \def\reserved@a{#3}\expandafter \ifx \csname #1@#2\endcsname
 \reserved@a  \else
\typeout{^^Jlabel #2 changed:^^J%
\meaning\reserved@a^^J%
\expandafter\meaning\csname #1@#2\endcsname^^J}%
\@tempswatrue \fi}
\begin{document}

\maketitle

\begin{abstract}
\noindent 
Quantum Merlin-Arthur proof systems are believed to be stronger than both their classical counterparts and ``stand-alone'' quantum computers when Arthur is assumed to operate in $\Omega(\log n)$ space. No hint of such an advantage over classical computation had emerged from research on smaller space bounds, which had so far concentrated on constant-space verifiers.  We initiate the study of quantum Merlin-Arthur systems with space bounds in $\omega(1) \cap o(\log n)$, and exhibit a  problem family $\mathcal{F}$, whose yes-instances have proofs that are verifiable by polynomial-time quantum Turing machines operating in this regime. We show that no problem in $\mathcal{F}$ has proofs that can be verified classically or is solvable by a stand-alone quantum machine in polynomial time if standard complexity assumptions hold. Unlike previous examples of small-space verifiers, our protocols require only subpolynomial-length quantum proofs.

  \vspace{1em}

\noindent\textbf{Keywords:}  quantum proofs, quantum Turing machines, Merlin-Arthur games, space-bounded quantum computation, sublogarithmic space complexity
\end{abstract}

\section{Introduction}\label{sec:intro}

The study of quantum advantage aims to distinguish setups where a model of quantum computation outperforms its classical counterpart from those where no such superiority is exhibited. Shor's celebrated factorization algorithm \cite{S97} provides one hint in this regard, although an unconditional conclusion of quantum advantage would also require a proof that no polynomial-time classical algorithm exists for that problem. Another conjectured difference between the power of classical and quantum machines, formulated as $\mathsf{MA}\subsetneq\mathsf{QMA}$, pertains to their use as verifiers for purported proofs associated with yes-instances in promise problems.   

Both $\mathsf{MA}$ and $\mathsf{QMA}$ are generalizations of the complexity class $\mathsf{NP}$, which is the set of promise problems that have polynomial-time deterministic Turing machines that take an input string $w$ and an additional \textit{certificate} string $c$ and accept if and only if $c$ constitutes a proof that $w$ is a yes-instance of the problem under consideration. When the verifier in the  $\mathsf{NP}$ scenario is substituted by a probabilistic Turing machine, and is allowed to output the wrong verdict with a small probability, one obtains the class $\mathsf{MA}$, whose name is inspired by the mighty but unreliable Merlin, who prepares the purported proof, and the resource-bounded Arthur, who attempts to verify it. Upgrading the verifier to a quantum machine and allowing the certificate to be a quantum state yields the class $\mathsf{QMA}$. \cite{VW15} As mentioned above, it is believed that $\mathsf{QMA}$ is a proper superset of $\mathsf{MA}$. 

The imposition of stricter space bounds on the polynomial-time verifiers in the setups described above has also been considered. Condon \cite{C93} proved that the version of $\mathsf{MA}$ where Arthur is further restricted to use logarithmic space equals $\mathsf{NP}$. Recently, it was also proven \cite{GR23} that the logspace restriction on the (quantum) Arthur leaves the class $\mathsf{QMA}$ unchanged. Note that, when modeling small-space verifiers, one assumes that the ``proof'', which can be polynomially long,\footnote{In both the classical and quantum \cite{MW05,FKLMN16,FR21} cases, logarithmic-length proofs are useless for logspace verifiers, i.e., such a verifier can be replaced by a stand-alone algorithm that runs within $O(\log n)$ space and solves the same problem.} and therefore cannot be stored in its entirety in the verifier's memory, is streamed bit by bit (or qubit by qubit) into the verifier from its original storage location.

Our focus  is on quantum verifiers with sublogarithmic space bounds. Unlike the case with  $s(n)\in\Omega(\log n)$, one is not able to use $s(n)$-space uniform families of quantum circuits and  $s(n)$-space quantum Turing machines as equivalent models of computation in this even stricter regime, and has to adopt a machine-based model. To date,  studies on quantum Merlin-Arthur proof systems in this regard have only considered   verifiers using  $O(1)$ space. \cite{VY14,Y22}  No hint of any advantage of two-way quantum  finite automata over their classical counterparts has emerged from that work, and all example protocols presented  by Villagra and Yamakami (\cite{VY14}, see also Lemma 5.2 in \cite{NY09}) involve languages which also have Merlin-Arthur systems with classical constant-space verifiers.\footnote{When one considers \textit{interactive} proof systems, where Merlin and Arthur can exchange several messages, constant-space quantum verifiers allowed to run for exponential time outperform their classical counterparts. \cite{DS92,Y13,SY17}} \cite{SY14}

In this paper, we initiate the study of quantum Merlin-Arthur systems with space bounds in $\omega(1) \cap o(\log n)$. In Section \ref{sec:defs}, we provide the relevant background information for our results and define a quantum Turing machine model suitable for representing verifiers under these extreme space bounds. Section \ref{sec:famil} describes our main result, namely,  a  problem family $\mathcal{F}$, whose members have yes-instances with proofs that are verifiable by polynomial-time quantum Turing machines operating in $\omega(\log \log n) \cap o(\log n)$ space. 
We obtain the members of $\mathcal{F}$ by padding $\mathsf{QMA}$-complete problems in a fashion that is decodable in small space.
We show that no problem in $\mathcal{F}$ has proofs that can be verified classically or is solvable by a stand-alone quantum machine in polynomial time if a standard complexity assumption about the difficulty of $\mathsf{QMA}$-complete problems holds. Unlike previous examples \cite{SY14,VY14} of small-space verifiers, our protocols require only subpolynomial-length quantum proofs.  Section \ref{sec:conc} is a conclusion.

\section{Definitions and background}\label{sec:defs}

Early work on quantum Merlin-Arthur systems \cite{KSV02,MW05} did not consider the imposition of subpolynomial space bounds on the polynomial-time verifiers.
The study of space-bounded quantum computation involves additional difficulties which do not manifest themselves when one merely focuses on time complexity. A case in point is the principle of deferred measurement. \cite{NC00} While it has been known for decades that making intermediate measurements during the execution of a quantum algorithm provides no time complexity advantage, this equivalence was only recently proved \cite{FR21} for ($\Omega(\log n)$) space complexity.

One complication that arises when sublogarithmic space bounds are considered is the breakdown of the equivalence between the uniform circuit family (UCF) and Turing machine (TM) models. 
Many works on quantum computation with $\Omega(\log n)$ space are presented in terms of the quantum circuit model, which is favored by quantum complexity theorists and physicists. We shall see shortly that this framework is not suitable for modeling computation with smaller space bounds. Before proceeding with the definition of the TM model to be used in this study, we will give an overview of the UCF-based logarithmic-space verification model of  Gharibian and Rudolph \cite{GR23}. The fact that TM's are able to simulate UCF's with no space overhead   will enable us to use a scaled-down version of a protocol from   \cite{GR23} in our main result.

Gharibian and Rudolph's definition \cite{GR23} of a proof system with a verifier using $s(n)$ space for inputs of length $n$ stipulates the existence of an $s(n)$-space uniform family of quantum circuits realizing the verification. When specialized to the case $s(n)\in\Theta(\log n)$, this requires a deterministic logspace Turing machine that prints circuit descriptions with $n^{O(1)}$  gates (from a fixed set that is universal for quantum computation) for any input of length $n$. These circuits operate on three registers named the \textit{input}, \textit{ancilla}, and \textit{proof} registers, containing $n$, $s(n)$, and $n^{O(1)}$ qubits, respectively. Every such circuit $Q_n$ is structured  so that it does not alter the contents of the input register, and the proof is ``read'' in a streaming fashion in the following manner: For a proof of $m$ qubits, the computation modeled by the circuit is sliced into $m+1$ stages, named $V_0$ through $V_m$. No gate in any $V_i$ subcircuit accesses the proof register. For each $i\in\Set{0,\dots,m-1}$, $Q_n$ contains a gate that swaps the first ancilla qubit with the $(i+1)$st proof qubit between the subcircuits $V_i$ and $V_{i+1}$.
Since the input register is never changed and each proof qubit is accessed only once,  the only ``real'' memory used by the verifier is the collection of logarithmically many qubits (all initialized to $\ket{0}$) in the ancilla register. It is within this framework that 
Gharibian and Rudolph have 
proven  that all problems in $\mathsf{QMA}$ have logspace quantum verifiers inspecting streamed quantum proofs in polynomial time.

It is easy to see that the above-mentioned definition of an $s(n)$-space verifier is unsuitable when $s(n)$ is a sublogarithmic function of $n$.
Although a deterministic Turing machine using memory in $\Omega(\log n)$  can output circuit descriptions with polynomially many gates, where all the polynomially many qubits in the circuit are named as needed, a machine with $o(\log n)$ space, which cannot fit even the index of an individual qubit in its worktape, seems too weak for this purpose. For this reason, we formulate a quantum Turing machine model (based on \cite{W03}) that can be used to study any space bound in order to represent Arthur directly in our protocols. 

In addition to a constant-sized central memory represented by a finite set of classical states, every  verifier $V$ to be modeled as a quantum Turing machine will have a finite \textit{quantum register} consisting of a fixed number of qubits, a classical \textit{measurement register} of fixed size, and four one-way infinite tapes:
\begin{itemize}
    \item The \textit{input tape} contains the input string $w$ sandwiched between the special endmarker symbols $\lend$ and $\rend$,  The read-only, two-way input head is located on the left endmarker $\lend$ at the beginning of the computation.
    \item The \textit{classical work tape} is accessed by a dedicated two-way read-write head which is located on its leftmost cell at the beginning of the computation. All cells contain the blank symbol initially.
    \item The \textit{quantum work tape} is accessed by a dedicated two-way  head which is located on its leftmost cell at the beginning of the computation. Each cell of this tape holds a qubit, and all these qubits are in their zero states initially.
    \item The \textit{proof tape} is accessed by a dedicated one-way  head (which cannot move to the left) that is located on its leftmost cell at the beginning of the computation. We assume that the $m$ qubits whose state is a purported proof that the input string $w$ is a yes-instance of the problem associated with $V$ are situated in the first $m$ cells of this tape at the start, and the head cannot move beyond the area occupied by the proof qubits.
\end{itemize}

\begin{definition}\label{def:qtm}
A \emph{quantum verifier}  is an 11-tuple  $\paren*{S,\Sigma,K,T,k,\delta_1,\delta_2,s_0,\tau_0,s_{acc},s_{rej}}$, where
\begin{itemize}
    \item $S$ is the finite set of classical states,
    \item $\Sigma$ is the finite input alphabet, not containing  $\lend$ and $\rend$, 
    \item $K$ is the finite classical work alphabet, including the blank symbol \blanksymb,
    \item $T$ is the finite set of possible values of the measurement register,
    \item $k$, a positive integer, is the number of qubits in the quantum register,
    \item $\delta_{1}: \paren*{S \setminus \Set{s_{acc},s_{rej}}} \times \paren*{\Sigma \cup \Set{\lend, \rend}} \times K \to \Delta$, where $\Delta$ is the set of all selective quantum operations \cite{W03} 
    that have operator-sum representations consisting of matrices of algebraic numbers,\footnote{It is well known \cite{NC00} that using only the members of a fixed, small set of algebraic numbers as the transition amplitudes (i.e., the entries of these matrices)  is sufficient for universal quantum computation.} is the first-stage transition function, 
    \item $\delta_{2}: \paren*{S \setminus \Set{s_{acc},s_{rej}}} \times \paren*{\Sigma \cup \Set{\lend, \rend}} \times K \times T  \to S \times K \times \Set{-1,0,1}^3\times \Set{0,1}$ is the second-stage transition function, which determines the updates to be performed on the classical components of the machine, including the positions of all  heads, and whether a new proof qubit will be swapped in or not, as described below,
    \item $s_0 \in S$ is the initial   state,
    \item $\tau_0 \in T$ is the initial value of the measurement register, and
    \item $s_{acc}, s_{rej}\in S$, such that $s_{acc} \neq s_{rej}$, are the accept and reject states, respectively.
\end{itemize}
\end{definition}

At the start of  computation,  a quantum verifier $V$ is in state $s_0$, the measurement register contains $\tau_0$, all qubits of the finite quantum register  are in their zero states, and the heads are initialized as described above.
Every computational step of $V$ consists of a \textit{first stage}, which operates on the quantum register and the quantum work tape and concludes by updating the measurement register, and  a \textit{second stage}, which can change the classical state, classical work tape content and the  head positions, and read a new proof qubit, based on the presently scanned input and classical work tape symbols and the value in the measurement register: If $V$ is in some non-halting state $s$, with the input head scanning the symbol $\sigma$ and the classical work tape head scanning the symbol $\kappa$, the selective quantum operation 
$\delta_1(s,\sigma,\kappa)$
acts on the quantum register and the qubit presently scanned by the quantum work tape head. 
This action may be a unitary transformation or a measurement, 
resulting in those $k+1$ qubits evolving, and some output $\tau$ being stored in the measurement register with an associated probability. The ensuing second-stage transition  $\delta_2(s,\sigma,\kappa, \tau)=(s',\kappa',d_I,d_C,d_Q,d_P)$ updates the classical state to $s'$, changes the presently scanned classical work tape symbol to $\kappa'$, and moves the input, classical work, quantum work and proof heads in the directions indicated by the increments $d_I$, $d_C$,  $d_Q$, and $d_P$, respectively.
If $d_P=1$, the second-stage transition concludes by swapping the qubit at the newly arrived position of the proof tape head by the first qubit of the finite quantum register.
We assume that $\delta_2$ is restricted so that it  never attempts to move the input head  off the tape area containing $\lend w\rend$, or to move any  work head to the left of the leftmost cell of the corresponding tape.
Computation halts with acceptance (rejection) when $V$ enters the  state $s_{acc}$ ($s_{rej}$).  

A quantum verifier is said to \textit{run within space} $s(n)$ if, on any input of length $n$, at most $s(n)$ cells are scanned on both the classical and quantum work tapes during the computation. 

\begin{definition}\label{def:QMA}
    A promise problem $A=(A_{yes},A_{no})$ is in $\mathsf{QMASPACE}(s(n))$ if there exists a polynomial-time quantum verifier $V$ which runs within space $O(s(n))$, such that, for every input string $w\in \Sigma^*$,
    \begin{itemize}
    \item if $w \in A_{yes}$, there exists an $m$-qubit quantum proof $\ket{\psi}$  such that $V$ accepts with probability at least $\frac{2}{3}$ when started with $\lend w\rend$ on the input tape and
 $m$ qubits whose state is $\ket{\psi}$ on the proof tape, and
 \item if $w \in A_{no}$, $V$ rejects with probability at least $\frac{2}{3}$ when started with $\lend w\rend$ on the input tape, regardless of the content of the proof tape.
\end{itemize}
\end{definition}

Note that Definition \ref{def:QMA} requires the verifier to respect the time and space bounds even when the input string is in $\Sigma^* \setminus (A_{yes}\cup A_{no})$.

For any $s(n)\in \Omega(\log n)$, machines conforming to Definition \ref{def:qtm} can simulate $s(n)$-space uniform circuit families within the same bound: The deterministic algorithm preparing the circuit for the given input size uses only $s(n)$ cells on the classical work tape, and the quantum register and the quantum work tape accommodate the $O(s(n))$ qubits of the simulated ancilla register. Although the overall computation of the quantum verifier as described is not necessarily unitary (because the classical part can evolve irreversibly), we note that the quantum part evolves unitarily if the simulated circuit contains no intermediate measurements. Recalling Fefferman and Remscrim's demonstration \cite{FR21} that intermediate measurements can be eliminated without increasing space or time complexity, we can  restate the relationship of Gharibian and Rudolph's result to the original definition of $\mathsf{QMA}$ in the terminology of Definition \ref{def:QMA}:

\begin{theorem}\label{thm:sqma}
    \cite{KSV02,GR23}
    $\mathsf{QMA}=\mathsf{QMASPACE}(n^{O(1)})=\mathsf{QMASPACE}(O(\log n))$.
\end{theorem}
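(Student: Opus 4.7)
The plan is to chain together three inclusions: (i) $\mathsf{QMASPACE}(O(\log n)) \subseteq \mathsf{QMASPACE}(n^{O(1)})$, which is immediate from monotonicity of the space bound in Definition \ref{def:QMA}; (ii) $\mathsf{QMASPACE}(n^{O(1)}) \subseteq \mathsf{QMA}$, a standard circuit simulation of a polynomial-time, polynomial-space quantum Turing machine; and (iii) $\mathsf{QMA} \subseteq \mathsf{QMASPACE}(O(\log n))$, which is where all the work resides and which rests on the theorem of Gharibian and Rudolph \cite{GR23}.

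For (ii), I would unroll the polynomial-time computation of a verifier $V$ conforming to Definition \ref{def:qtm} into a quantum circuit of polynomial size. The classical components (state, classical work tape contents, head positions, measurement register) can be encoded in polynomially many ancilla qubits; each step's first-stage selective quantum operation $\delta_1$ is realized by gates from a universal set acting on the quantum register and the scanned qubit of the quantum work tape, which is possible because the operator-sum representations are restricted to algebraic amplitudes; the second-stage classical update $\delta_2$ becomes a controlled unitary on the classical ancillas; and the proof tape becomes the QMA proof register, accessed in a streaming fashion via the one-way head. Completeness and soundness thresholds are preserved verbatim, so a $\mathsf{QMASPACE}(n^{O(1)})$ protocol becomes a $\mathsf{QMA}$ protocol in the original Kitaev--Shen--Vyalyi sense \cite{KSV02}.

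For (iii), I would invoke \cite{GR23} to obtain, for any $A\in\mathsf{QMA}$, a logspace-uniform family of quantum circuits with the streaming structure recalled in Section \ref{sec:defs}: a polynomial-size circuit sliced into subcircuits $V_0,\dots,V_m$ separated by swap gates that rotate proof qubits into the ancilla register one at a time. I would then reuse the simulation already sketched in the paragraph preceding the theorem statement to realize this circuit family as a quantum verifier in the sense of Definition \ref{def:qtm} within $O(\log n)$ space: the classical work tape runs the logspace circuit generator that emits gate descriptions on demand, the quantum register together with the quantum work tape accommodate the $O(\log n)$ ancilla qubits, the successive $V_i$ subcircuits are played out as sequences of $\delta_1$ operations indexed by the generator, and the one-way proof head with the $d_P=1$ swap-in mechanism of $\delta_2$ directly implements the streaming swap gates between the $V_i$'s. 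If the circuits supplied by \cite{GR23} use intermediate measurements, I would first apply \cite{FR21} to replace them by measurement-free equivalents at no cost in space or time, so that the quantum part of the simulating verifier evolves unitarily between classically read measurement outcomes.

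The main obstacle is entirely concentrated in step (iii), and it is not in the translation itself but in the black-box use of the Gharibian--Rudolph construction. The conversion from the UCF model to the TM model of Definition \ref{def:qtm} is routine precisely because $\Omega(\log n)$ space is enough to index and address the qubits produced by the circuit generator, which is the same obstruction that forces the sublogarithmic-space study in Section \ref{sec:famil} to abandon the UCF framework altogether and work with Definition \ref{def:qtm} from the start.
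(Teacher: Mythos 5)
Your proposal is correct and follows essentially the same route as the paper: the paper's justification for this theorem is precisely the paragraph preceding it, namely the trivial containment between the two space bounds, the standard poly-time QTM-to-circuit simulation credited to \cite{KSV02}, and the hard direction $\mathsf{QMA}\subseteq\mathsf{QMASPACE}(O(\log n))$ obtained by combining \cite{GR23} with \cite{FR21} and the observation that the Turing machine model of Definition \ref{def:qtm} simulates $s(n)$-space uniform circuit families with no space overhead when $s(n)\in\Omega(\log n)$. Your three-inclusion decomposition, including the role assigned to the measurement-elimination result of \cite{FR21} and to the streaming swap-in mechanism of the proof head, matches the paper's argument point for point.
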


Analogously to the well-known definition of $\mathsf{NP}$-completeness, we say that a promise problem $A$ is \textit{$\mathsf{QMA}$-complete} if $A$ is both in $\mathsf{QMA}$ and $\mathsf{QMA}$-hard under polynomial-time classical (Karp) reductions. Several interesting problems have been shown to be   $\mathsf{QMA}$-complete. \cite{VW15}

The known relationships between  $\mathsf{QMA}$ and other complexity classes can be summarized as follows. Since quantum machines can simulate their classical counterparts, we have
\begin{equation*}
    \mathsf{MA}\subseteq \mathsf{QMA}.
\end{equation*}
Since Arthur can ignore Merlin's message while solving the problem, 
\begin{equation*}
    \mathsf{BQP}\subseteq \mathsf{QMA}.
\end{equation*}
We also know \cite{VW15} the following inclusions:
\begin{equation}\label{eq:qma}
    \mathsf{QMA}\subseteq \mathsf{PP}\subseteq \mathsf{PSPACE}\subseteq \mathsf{EXPTIME}
\end{equation}

Unfortunately, as is often the case in computational complexity theory, it is not known whether any of the inclusions stated above is strict. We will therefore base our claim about the ``difficulty'' of the problem family to be introduced in Section \ref{sec:famil} on the following  hypothesis, which we believe reflects the orthodox opinion among experts: 

\paragraph{The  exponential time hypothesis for  $\mathsf{QMA}$-complete problems.} Informally, the exponential time hypothesis for  $\mathsf{QMA}$-complete problems (ETH-QMA for short) states that neither a classical Merlin-Arthur proof system nor a stand-alone quantum algorithm can handle any $\mathsf{QMA}$-complete problem in subexponential time. Let $\mathsf{MATIME}(t(n))$ and $\mathsf{BQTIME}(t(n))$ denote respectively the generalizations of the complexity classes $\mathsf{MA}$ and $\mathsf{BQP}$, where the runtime budget of the machines is the (not necessarily polynomial) function $t(n)$. ETH-QMA is then the claim that the set
\[
\mathsf{MATIME}(t(n))\cup\mathsf{BQTIME}(t(n))
\]
contains no $\mathsf{QMA}$-complete problems if $t(n)$ is a subexponential function, i.e. if $t(n)\in 2^{n^{o(1)}}$.\footnote{By Equation \ref{eq:qma}, we know that  $\mathsf{QMA}$ is included in both $\mathsf{MATIME}(t(n))$ and $\mathsf{BQTIME}(t(n))$
for  $t(n)\in 2^{n^{O(1)}}$. Every $\mathsf{QMA}$-complete problem $A$ solvable in time $2^{n^{k}}$ for some constant $k>0$ can be ``padded'' to obtain another  $\mathsf{QMA}$-complete problem $A'$ which can then be solved by a machine whose runtime is proportional to $2^{n^{k/c}}$ for any positive constant $c$. The task of ``unpacking'' the padded strings is easy for machines employing at least logarithmic space.}

\vspace{0.5cm}

We conclude this section by recalling a classical result by Stearns et al. \cite{SHL65} about sublogarithmic-space computation that will be useful in Section \ref{sec:famil}:
For any integer $l\geq 0$, let bin$(l)$ denote the shortest string representing $l$ in the binary notation. The language 
\[
L_S=\{\text{bin}(0)\$\text{bin}(1)\$\text{bin}(2)\$\cdots\$\text{bin}(k) \mid k>0\}
\]
is in the complexity class $\mathsf{SPACE}(\log \log n)$. The proof uses the fact that the rightmost ``segment'' (i.e. the postfix following the rightmost $\$$) of any $n$-symbol member of $L_S$ is of length $m\in\Theta(\log n)$. A polynomial-time Turing machine compares consecutive segments bit by bit, and halts when it detects an inconsistency or when all comparisons are successful. Since each comparison involves  a counter whose value can be at most $m$, the space complexity is $\Theta(\log \log n)$. Yes-instances of the problem family to be defined in the next section will be based on the pattern of $L_S$.

\section{A 
family of problems}\label{sec:famil}

This section presents the description of a  family $\mathcal{F}$ of promise problems and an examination of $\mathcal{F}$'s complexity-theoretic properties. We give a quick overview: Every member of  $\mathcal{F}$ is obtained by padding a different $\mathsf{QMA}$-complete problem. Since the padding is superpolynomially long, yes-instances of problems in $\mathcal{F}$  need correspondingly shorter quantum proofs compared to yes-instances of the unpadded versions. Since the padding is not \textit{too} long, if the ETH-QMA holds, those seemingly simpler problems are still too difficult for stand-alone quantum computers or classical verifiers to handle in polynomial time. All of this would be straightforward  if Arthur did not have to worry about space bounds, but he has to, and  the discussion below concentrates on the construction of verifiers that use only sublogarithmic amounts of space to carry out their protocols.

\subsection{Padded problems}\label{sec:F}

 Consider the language
\begin{equation*}\label{eq:Lu}
L_u=\{\text{bin}(0)\$\text{bin}(1)\$\text{bin}(2)\$\cdots\$\text{bin}(k) \mid k>0, \text{bin}(k)\in 1^+
\}.     
\end{equation*}
 Note that $L_u\subsetneq L_S$ (Section \ref{sec:defs}), such that $L_u$ contains only the strings whose rightmost segments represent some integer of the form $2^m-1$, where $m$ is a positive integer. 
In the following, let $u_m$ denote the member of $L_u$ which ends with the postfix $\$1^m$. 


We are now ready to describe the problem family $\mathcal{F}$. 
$\mathcal{F}$'s alphabet is $\Sigma=(\Set{0,1,\$})\times (\Set{0,1,\#})$. Strings written using $\Sigma$ are to be interpreted as containing two tracks, with the symbol $(\sigma_u,\sigma_l)$ indicating that the upper (resp. lower) track contains the symbol $\sigma_u$ (resp. $\sigma_l$) at that position. 
 
 A string on $\Sigma$ whose upper track equals $u_m$ and whose lower
track is a member of the finite set
\begin{equation}\label{eq:lowtrack}
  \left\{ w \#^{|u_m| - j} \;\middle|\; w = \{0,1\}^j,\ j \leq 2^{\frac{m}{|\text{bin}(m)|
  }
  } \right\}  
\end{equation}
is said to be a \textit{padding of $w$}. For any language $L$, $pad(L)$ is the set of all strings which are paddings of the members of $L$.

\begin{definition}\label{def:F}
$\mathcal{F}$ is the set of all promise problems of the form  
$A' = (A'_{{yes}}, A'_{{no}})$, such that
\[
A'_{{yes}} = {pad}(A_{{yes}}), \quad \text{and}
\]
\[
A'_{{no}} = {pad}(A_{{no}}) \cup \left( \Sigma \setminus {pad}(\{0,1\}^*) \right)
\]
for some $\mathsf{QMA}$-complete promise problem  
$A = (A_{{yes}}, A_{{no}})$ on the alphabet $\{0,1\}$.    
\end{definition}

Essentially, $A'$ is a heavily padded version of $A$, with strings which are not syntactically correct paddings also considered as no-instances.
Although it will be seen that the precise amount of padding can be selected from a wide variety of superpolynomial (but subexponential) functions, we have fixed that parameter to a specific function to simplify the exposition below.

\subsection{The verifier}\label{sec:V}

Consider a member $A'$ of $\mathcal{F}$. Mirroring Definition \ref{def:F}, we describe  a polynomial-time quantum verifier $V$ for $A'$ that  decides whether its input of length $n$ is a well-formed padding of some binary string $w$ and is able to check a quantum proof which claims $w$ is a yes-instance of $A$. 

$V$ starts by running the $O(\log \log n)$-space deterministic algorithm of Stearns et al. \cite{SHL65} that was described in Section \ref{sec:defs}, followed by an easy check to see if the final segment is composed entirely of 1's, to determine whether the upper track of its input is some member $u_m$ of $L_u$. If this check fails, $V$ halts by declaring that the input is a no-instance. Otherwise, $V$ proceeds to check whether the lower track
is of the form in Expression (\ref{eq:lowtrack}). Note that $V$ already has a $O(\log \log n)$-bit representation of the length of the last segment of the upper track in its classical work tape at this point. The  computation and storage of the integer $i=\lceil\frac{m}{\lfloor\log m\rfloor+1}\rceil$ 
is also  performed in $O(\log \log n)$ space.\footnote{Note that $|\text{bin}(m)|=\lfloor\log m\rfloor+1$. (Logarithms are to the base 2.)}
$V$ then writes the number $2^i$ in binary notation in its classical work tape, using $\lceil\frac{m}{\lfloor\log m\rfloor+1}\rceil+1$, 
i.e. $O(\frac{\log n}{\log \log n})$ bits. It then uses this variable as a counter to check whether the lower track is of the form $\Set{0,1}^j \#^*$, where $j\leq 2^i$.

If the input is verified to be a legitimate padding of some binary string $w$, $V$ runs the streaming quantum Merlin-Arthur protocol (which must exist by Theorem \ref{thm:sqma}) to verify the quantum proof's claim that $w$ is a yes-instance of $A$. As noted in Section \ref{sec:defs}, this protocol uses only logarithmically many qubits in terms of the length of $w$, which corresponds to just $O(\frac{\log n}{\log \log n})$ space in the quantum work tape of $V$. Since the runtime of the protocol is bounded by a polynomial in terms of the (already very small) length of $w$, $V$'s runtime as a verifier for proofs about $A'$ is also polynomially bounded.

We note that the quantum proofs read by $V$ are significantly shorter than those considered previously in the study of constant-space Merlin-Arthur systems. Those protocols \cite{SY14,VY14} have proofs whose lengths are at least linear in terms of the input. $V$ requires its proofs to be only polynomially long \cite{GR23} in terms of the length of $w$. Since 
\begin{equation}\label{eq:obs}
    2^{\frac{\log n}{\log \log n}}=n^{\frac{1}{\log \log n}},
\end{equation} 
any power of $|w|$ is necessarily in $o(n)$. We argue that $V$ handles significantly more difficult problems that those of \cite{SY14,VY14} in the next subsection.

\subsection{The superiority of quantum verifiers for $\mathcal{F}$}

In this subsection, we will provide evidence that  quantum verifiers using sublogarithmic space can handle problems that are impossible for both stand-alone (polynomial-space) quantum algorithms and classical verifiers under comparable time bounds. 

\begin{theorem}\label{thm:main}
    If the ETH-QMA holds, neither $\mathsf{BQP}$ nor $\mathsf{MA}$ contains any member of $\mathcal{F}$.
\end{theorem}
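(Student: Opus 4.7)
The plan is to prove both halves of the theorem by contradiction via a padding reduction. Suppose some $A'\in\mathcal{F}$ belongs to $\mathsf{BQP}$ (resp. $\mathsf{MA}$). By Definition~\ref{def:F}, $A'$ is built by padding some $\mathsf{QMA}$-complete problem $A=(A_\text{yes},A_\text{no})$ on $\{0,1\}$. I would exhibit a polynomial-time (in the output length) classical reduction $\phi\colon\{0,1\}^*\to\Sigma^*$ from $A$ to $A'$ for which $|\phi(w)|$ grows only mildly super-polynomially in $|w|$, and then compose $\phi$ with the hypothetical polynomial-time decider/verifier for $A'$ to obtain a subexponential-time $\mathsf{BQP}$ or $\mathsf{MA}$ algorithm for $A$—contradicting ETH-QMA.

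The reduction $\phi$ is the natural one: given $w\in\{0,1\}^*$, let $m$ be the least positive integer with $|w|\le 2^{m/(\lfloor\log m\rfloor+1)}$, and define $\phi(w)$ to be the two-track string whose upper track is $u_m$ and whose lower track is $w$ followed by $|u_m|-|w|$ copies of $\#$. Then $\phi(w)$ is by construction a padding of $w$ in the sense of Expression~(\ref{eq:lowtrack}), so Definition~\ref{def:F} gives the equivalences $w\in A_\text{yes}\Leftrightarrow\phi(w)\in A'_\text{yes}$ and $w\in A_\text{no}\Leftrightarrow\phi(w)\in A'_\text{no}$.

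Next I would estimate $n:=|\phi(w)|$. Since $u_m$ is the concatenation of $\text{bin}(0),\ldots,\text{bin}(2^m-1)$ with $\$$-separators, $|u_m|=\Theta(m\cdot 2^m)$. The choice of $m$ forces $m=\Theta(\log|w|\cdot\log\log|w|)$, so
\[
n \;=\; \Theta(m\cdot 2^m) \;=\; 2^{\Theta(\log|w|\cdot\log\log|w|)} \;=\; |w|^{\Theta(\log\log|w|)}.
\]
In particular, $n=2^{|w|^{o(1)}}$, so $n$ is subexponential in $|w|$. The map $\phi$ itself is computable deterministically in time $O(n)$: one enumerates $0,1,\ldots,2^m-1$, writing each binary representation interspersed with $\$$ on the upper track while copying $w$ and then padding with $\#$'s on the lower track.

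To finish, in the $\mathsf{BQP}$ case a polynomial-time $\mathsf{BQP}$ machine for $A'$ fed $\phi(w)$ decides $w\in A$ in time $n^{O(1)}=2^{O(\log|w|\cdot\log\log|w|)}\subseteq 2^{|w|^{o(1)}}$, placing $A$ in $\mathsf{BQTIME}(2^{|w|^{o(1)}})$. In the $\mathsf{MA}$ case, the polynomial-time Arthur for $A'$ run on $\phi(w)$ with Merlin's message for $\phi(w)$ gives an $\mathsf{MA}$ protocol for $A$ with the same subexponential runtime, so $A\in\mathsf{MATIME}(2^{|w|^{o(1)}})$. Either way, since $A$ is $\mathsf{QMA}$-complete, the existence of such an algorithm contradicts ETH-QMA. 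The only real arithmetic to double-check is the length estimate in the previous paragraph: the padding exponent $m/|\text{bin}(m)|$ in~(\ref{eq:lowtrack}) is calibrated precisely so that $n$ is both large enough that $\mathsf{poly}(n)$ pulls back to subexponential in $|w|$ and small enough that $\phi$ itself stays subexponential, so the resulting $A$-algorithm indeed falls into the forbidden regime.
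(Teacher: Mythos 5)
Your proposal is correct and takes essentially the same route as the paper's own proof: a deterministic padding reduction from the underlying $\mathsf{QMA}$-complete problem $A$ to $A'$, composed with the hypothetical polynomial-time machine (or classical Arthur) for $A'$, yielding a $2^{O(\log|w|\cdot\log\log|w|)}\subseteq 2^{|w|^{o(1)}}$-time algorithm for $A$ that contradicts ETH-QMA. The only differences are cosmetic: you choose $m$ as the least integer satisfying an inequality rather than solving $m=\lceil\log|w|\rceil\cdot|\text{bin}(m)|$ exactly as in Figure~\ref{fig:reduction}, and you obtain the length bound $n=|w|^{\Theta(\log\log|w|)}$ by a direct forward estimate of $|u_m|=\Theta(m\,2^m)$ rather than the paper's inversion of logarithms, but both computations arrive at Equation~\ref{eq:son} and the same contradiction.
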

\begin{proof}
    Assume that the ETH-QMA holds, and that some member $A'$ of $\mathcal{F}$ is solved in polynomial time by a stand-alone quantum algorithm $M'$. (The proof works verbatim for a verifier  in a classical Merlin-Arthur system playing the role of $M'$ as well.) We show how to build a quantum algorithm $M$ that solves the $\mathsf{QMA}$-complete problem $A$, from which $A'$ was derived, in subexponential time, thereby violating the ETH-QMA.

    $M$ starts by running the deterministic algorithm  in Figure \ref{fig:reduction} to convert its input $w$ to a padding of $w$ in the format described in Section \ref{sec:F}. In that algorithm, Stage 1 takes polynomial time. To see that it is easy to find an $m$ satisfying the equation depicted in Stage 2, consider  searching for a number $d$ (the ``length'' of $m$) that satisfies $2^{d-1}\leq i\times d \leq 2^d-1$, that is, $\frac{2^{d-1}}{d}\leq i \leq \frac{2^d-1}{d}$. The smallest integer $d$ which satisfies $i \leq \frac{2^d-1}{d}$ is guaranteed to also satisfy $\frac{2^{d-1}}{d}\leq i$, since otherwise  we would have $i \leq \frac{2^{d-1}-1}{d-1}$, contradicting the minimality of $d$.\footnote{$\frac{2^{d-1}}{d}\leq \frac{2^{d-1}-1}{d-1}$ for all $d>1$. Very short inputs are handled specially in Stage 1.} Since $d$ is much smaller than $i$, which is itself logarithmic in terms of the length of the input, Stage 2 also takes polynomial time.

\begin{figure}[htb!]
    \caption{The reduction of $A$ to $A'$}
    \label{fig:reduction} 
    \begin{turing}[(RW)]{V}{On input $w$:}
        \titem[1. ]{Store the integer $i$ such that $i=\lceil r \rceil$, where  $|w|=2^r$. (If $|w|<3$, let $i=2$.)}
        \titem[2. ]{Find the smallest integer $m$ satisfying the equation $m=i\times |\text{bin}(m)|$.}
        \titem[3. ]{Print the padding of $w$ as follows:}
        \ttitem[ 3.1]{The upper track  will contain $u_m$, i.e. the member of $L_u$ that ends with the segment $1^m$.}
        \ttitem[ 3.2]{The lower track  will contain the member of $w\#^*$ whose length matches the string in the upper track.
        }
            \end{turing}
\end{figure}

    Since the computation required to determine the next symbol to print during the execution of  Stage 3 is trivial, the runtime of this stage is determined by the length of the string it prints. Noting that the algorithm of Figure \ref{fig:reduction} clearly prints a padding of $w$, we examine the relationship between the length $n$ of the final printed string and the length $j$ of $w$ to bound this runtime. 

    Stage 1 ensures that $j>2^{i-1}$ for all sufficiently long $w$. Since $m\in \Theta(\log n)$, we have 
    \begin{equation*}
     j\in \Theta(n^{\frac{1}{\log \log n}})   
    \end{equation*}
 by the observation (\ref{eq:obs}) at the end of Section \ref{sec:V}. We ``invert'' that relationship as follows: Taking logarithms and rearranging, one gets
\begin{equation}\label{eq:ll}
\log n = \log j \times \log \log n.    
\end{equation}
    Taking logarithms again, we have
\begin{equation}\label{eq:lll}
\log \log n = \log \log j + \log \log \log n.    
\end{equation}
Clearly, $\log \log j < \log \log n$. Since $\frac{1}{2}\log \log n > \log \log \log n$ for all sufficiently large $n$, Equation \ref{eq:lll} yields 
\begin{equation}\label{eq:3lu}
    \log \log j < \log \log n < 2\log \log j
\end{equation}
for all such $n$. Using Equations \ref{eq:ll} and \ref{eq:3lu}, we get
\[
\log n = \Theta(\log j \times \log \log j),    
\]
and conclude that
\begin{equation}\label{eq:son}
n = j^{\Theta(\log \log j)}.     
\end{equation}

It is easy to see from Equation \ref{eq:son} that $n$ is a subexponential function of $j$. The algorithm of Figure \ref{fig:reduction} therefore reduces $A$ to $A'$ in subexponential time. Algorithm $M$ then submits the output of this reduction as the input of the hypothetical polynomial-time algorithm $M'$, and reports the verdict of $M'$ as its own decision. Since any polynomial in $n$ would again be subexponential in $j$ (Equation \ref{eq:son}), $M$ handles problem $A$ in subexponential time, contradicting the ETH-QMA.
\end{proof}

When comparing our verifiers  with stand-alone quantum algorithms, Theorem \ref{thm:main}  indicates that the quantum proofs read by our machines are ``useful''. If the comparison is with classical verifiers,   Theorem \ref{thm:main} is a potential quantum advantage result.

\section{Concluding remarks}\label{sec:conc}

For the first time, we have demonstrated the existence of a task at which sublogarithmic-space quantum Merlin-Arthur systems seem to outperform both their classical counterparts and stand-alone quantum computers, even when those competitors are allowed to use polynomial amounts of space in their computations. Treatment of such small space bounds required a verifier definition based directly on the quantum Turing machine model. The protocol presented in Section \ref{sec:V} is also novel in the sense that no sublinear-length proofs were considered for such small-space verifiers until now. 

As mentioned during the presentation of $\mathcal{F}$ in Section \ref{sec:F}, it is possible to obtain similar families of problems by varying the amount of padding in the definition. We note that the padding should not be selected to be too long: If Expression (\ref{eq:lowtrack}) is modified so that $w$ is too short, e.g., if $|w|\leq \log \log n$ where $n$ is the length of the padding, even a stand-alone ``brute-force'' deterministic algorithm can solve the associated problem in time polynomial in $n$. (Equation \ref{eq:qma}) It is therefore still an open question whether there exists a potential quantum advantage as the one we demonstrate here for  verifiers respecting even smaller space bounds like  $O(1)$.

\section*{Acknowledgments}

The author thanks Ryan O'Donnell, Abuzer Yakary{\i}lmaz and Utkan Gezer for all the useful discussions. 

\bibliographystyle{abbrvnat}

\bibliography{references} 
\end{document}